\documentclass[11pt]{article}
\usepackage{amssymb}
\usepackage{amsfonts}
\usepackage{amsmath}

\newtheorem{theorem}{Theorem}

\newtheorem{lemma}{Lemma}

\newtheorem{proposition}{Proposition}

\begin{document}

\title{A Simple Method for School Choice Lotteries\thanks{%
The author is grateful to Minoru Kitahara for insightful comments and
suggestions. This work was supported by JSPS KAKENHI Grant Number 25K05004.}}
\author{Yasunori Okumura\thanks{%
Department of Logistics and Information Engineering, Tokyo University of
Marine Science and Technology (TUMSAT), 2-1-6 Etchujima, Koto-ku, Tokyo
135-8533, Japan, Phone: +81-3-5245-7300, Fax: +81-3-5245-7300, E-mail:
okuyasu@gs.econ.keio.ac.jp}}
\maketitle

\begin{center}
\textbf{Abstract}
\end{center}

This note proposes a simple polynomial-time method for constructing an ex
ante stable school-choice lottery satisfying equal treatment of equals
(ETE). We show that the ETE reassignment of any constrained efficient stable
matching is ex ante stable, satisfies ETE, and is not ordinally dominated by
any other ex ante stable lottery. We further show that there exists a
constrained efficient stable matching whose ETE reassignment is not
ordinally dominated by any ex post stable lottery.

\textbf{JEL classification}: C78, D63, D47

\textbf{Keywords: }School choice lottery, Ex ante stability, Equal treatment
of equals, Efficiency \newpage

\section{Introduction}

Much of the literature on school choice has focused on deterministic
matchings alone. In real-world school choice systems, however, ties are
sometimes broken by lottery, so that the resulting assignment should be
represented as a probability distribution over deterministic matchings. From
this perspective, Kesten and \"{U}nver (2015) model a school choice
mechanism as a lottery mechanism and analyze it using the notion of ex ante
stability.

In particular, Kesten and \"{U}nver (2015) introduce two mechanisms. First,
FDA yields an efficient lottery subject to ex ante stability and the
requirement that students with the same priority be treated equally in each
school. Second, FDAT relaxes this requirement by permitting differential
treatment among students with the same priority when they have different
preferences, and yields a lottery that is not ordinally dominated by any
other ex ante stable lottery. In other words, FDAT yields a lottery that is
constrained efficient within the class of ex ante stable ones and satisfies
equal treatment of equals.

However, the computational tractability of these mechanisms is problematic.
Indeed, Cookson and Shah (2025) present a counterexample on which the FDA
enters an infinite loop and hence does not terminate in finite time.%
\footnote{%
Cookson and Shah (2025) propose a polynomial-time algorithm, DFDA-SCC, but
note that it is not known whether this algorithm always produces the same
outcome as FDA.} The same difficulty arises for FDAT as well, since FDAT
uses FDA in its initial step to obtain the starting lottery.

In light of this issue, this note proposes, as an alternative to FDAT, a
very simple polynomial-time method that yields an ex ante stable lottery
satisfying equal treatment of equals that is not ordinally dominated by any
other ex ante stable lottery. The method proceeds as follows. First, we
compute a deterministic constrained efficient matching; that is, a stable
matching that is not Pareto dominated by any other stable matching. Second,
we derive the ETE reassignment, which is the reassignment method introduced
by Okumura (2025), of the matching obtained in the first step.

Moreover, we consider a weaker stability concept of ex post stability. Since
ex ante stability implies ex post stability but not conversely, the ETE
reassignment of a deterministic lottery induced by a constrained efficient
matching may be ordinally dominated by an ex post stable lottery. However,
we show that there exists some stable matching such that the ETE
reassignment of the stable matching is not ordinally dominated by any ex
post stable matching.

\section{Model}

We introduce the school choice problem as considered by Kesten and \"{U}nver
(2015). A school choice problem is a five-tuple 
\begin{equation*}
\left( I,C,q,P,\succsim \right) ,
\end{equation*}%
where $I$ and $C$ are finite sets of students and schools, respectively. The
vector $q=\left( q_{c}\right) _{c\in C}$ is a quota profile, where $q_{c}\in 
\mathbb{N}$ represents the simple quota constraint of $c$. Here, we assume $%
\sum\nolimits_{c\in C}q_{c}\geq \left\vert I\right\vert $.

The preference profile $P=\left( P_{i}\right) _{i\in I}$ specifies students'
preferences, where $P_{i}$ is a strict linear order over $C$ (satisfying
transitivity, totality and irreflexivity) representing the strict preference
ranking of the schools of student $i$. If $i$ prefers school $c$ to school $%
c^{\prime }$, we write $cP_{i}c^{\prime }$. Moreover, we write $%
cR_{i}c^{\prime }$ if $cP_{i}c^{\prime }$ or $c=c^{\prime }$.

Finally, $\succsim =\left( \succsim _{c}\right) _{c\in C}$ is a priority
profile, where $\succsim _{c}$ is a weak order over $I$ (satisfying
transitivity and completeness), representing the weak ranking of the
students of school $c$. If $i\succsim _{c}j$ and $\lnot \left( j\succsim
_{c}i\right) $, we write $i\succ _{c}j$ meaning that student $i$ has a
higher priority than student $j$ for school $c$. On the other hand, if $%
i\succsim _{c}j$ and $j\succsim _{c}i$, we write $i\sim _{c}j$ meaning that $%
i$ and $j$ are tied for school $c$.

Let $\mu $ be a (deterministic)\textbf{\ matching} satisfying for all $i\in
I $ and all $c\in C$, $\mu \left( i\right) \in C$, $\mu \left( c\right)
\subseteq I,$ $\mu \left( i\right) =c$ if and only if $i\in \mu \left(
c\right) ,$ and $\left\vert \mu \left( c\right) \right\vert \leq q_{c}$. Let 
$\mathcal{M}$ be the set of all possible matchings.

A matching $\mu $ is said to have an (ex post) \textbf{justified envy} if
there are two students $i$ and $j$ such that $\mu \left( j\right) P_{i}\mu
\left( i\right) $ and $i\succ _{\mu \left( j\right) }j$.\ A matching $\mu $
is said to be \textbf{stable} if $\mu $ has no justified envy. A matching $%
\mu $ is said to be \textbf{Pareto dominated} by another matching $\mu
^{\prime }$ if $\mu ^{\prime }(i)R_{i}\mu (i)$ for all $i\in I$ and $\mu
^{\prime }(j)P_{j}\mu (j)$ for some $j\in I$. A matching $\mu $ is said to
be \textbf{constrained efficient }if $\mu $ is stable and not Pareto
dominated by any other stable matching.

Let $\lambda =\left( \lambda _{\mu }\right) _{\mu \in \mathcal{M}}$ be a 
\textbf{lottery} satisfying $\lambda _{\mu }\in \left[ 0,1\right] $ for all $%
\mu \in \mathcal{M}$ and $\sum\nolimits_{\mu \in \mathcal{M}}\lambda _{\mu
}=1$. That is, a lottery is a probability distribution over matchings. Let $%
\Delta \mathcal{M}$ be the set of all possible lotteries. Moreover, let 
\begin{equation*}
\mathcal{M}\left( \lambda \right) =\left\{ \left. \mu \in \mathcal{M}\text{ }%
\right\vert \text{ }\lambda _{\mu }>0\right\} 
\end{equation*}%
be the support of lottery $\lambda $. Specifically, let $\lambda ^{\mu }\in
\Delta \mathcal{M}$ be the \textbf{deterministic} \textbf{lottery} that
assigns probability one to $\mu $, that is, $\lambda _{\mu }^{\mu }=1$.

Kesten and \"{U}nver (2015) define a random matching as a stochastic matrix
of assignment probabilities. A lottery over deterministic matchings induces
such a random matching, and they study stability and related properties at
the level of the induced random matching. By contrast, we analyze the
properties of the lottery itself. A similar lottery-based approach is also
adopted by Kesten et al. (2017).

For each $i\in I$ and $\lambda \in \Delta \mathcal{M}$, let $\mathbf{c}%
\left( i,\lambda \right) $ be a random school (variable) assigned to $i$.
Formally, for each $c\in C$, $\mathbf{c}\left( i,\lambda \right) =c$ with
probability%
\begin{equation*}
\Pr (c,\mathbf{c}\left( i,\lambda \right) )=\sum\limits_{\mu :\text{ }\mu
\left( i\right) =c}\lambda _{\mu }\text{.}
\end{equation*}%
Finally, for each $c\in C$, define 
\begin{equation*}
\bar{F}\left( c,i,\lambda \right) =\sum\limits_{c^{\prime }:c^{\prime
}R_{i}c}\Pr (c^{\prime },\mathbf{c}\left( i,\lambda \right) ),
\end{equation*}%
which is the probability that student $i$ is assigned to a school that $i$
weakly prefers to $c$ under $\lambda $.

We say that for $i\in I$, $\mathbf{c}\left( i,\lambda \right) $ is \textbf{%
weakly} (first-order) \textbf{stochastically dominated} by $\mathbf{c}\left(
i,\lambda ^{\prime }\right) $ if $\bar{F}\left( c,i,\lambda ^{\prime
}\right) \geq \bar{F}\left( c,i,\lambda \right) $ for all $c\in C$. On the
other hand, $\mathbf{c}\left( i,\lambda \right) $ is \textbf{strictly
stochastically dominated} by $\mathbf{c}\left( i,\lambda ^{\prime }\right) $
if $\bar{F}\left( c,i,\lambda ^{\prime }\right) \geq \bar{F}\left(
c,i,\lambda \right) $ for all $c\in C$ and $\bar{F}\left( c,i,\lambda
^{\prime }\right) >\bar{F}\left( c,i,\lambda \right) $ for some $c\in C$. A
lottery $\lambda \in \Delta \mathcal{M}$ is \textbf{ordinally dominated} by
another lottery $\lambda ^{\prime }\in \Delta \mathcal{M}$ if for all $i$, $%
\mathbf{c}\left( i,\lambda \right) $ is weakly stochastically dominated by $%
\mathbf{c}\left( i,\lambda ^{\prime }\right) ,$ and for some $i$, $\mathbf{c}%
\left( i,\lambda \right) $ is strictly stochastically dominated by $\mathbf{c%
}\left( i,\lambda ^{\prime }\right) $.

We provide two stability notions on lotteries.

A lottery $\lambda $ is said to have \textbf{ex ante justified envy} if
there are two students $i$ and $j$, school $s$ and matchings $\mu ,\mu
^{\prime }\in \mathcal{M}\left( \lambda \right) ,$ such that $sP_{i}\mu
\left( i\right) $, $i\succ _{s}j,$ and $\mu ^{\prime }\left( j\right) =s$.
Note that if $\mu =\mu ^{\prime };$ that is, if $\mu \left( j\right)
=sP_{i}\mu \left( i\right) $ and $i\succ _{s}j$, then $\mu $ has a justified
envy. A lottery $\lambda $ is said to be \textbf{ex ante stable} if it has
no ex ante justified envy.

We immediately have the following result.

\begin{lemma}
If $\mu $ is stable, then $\lambda ^{\mu }$ is ex ante stable. Moreover, if $%
\mu $ is constrained efficient, then $\lambda ^{\mu }$ is not ordinally
dominated by any ex ante stable lottery.
\end{lemma}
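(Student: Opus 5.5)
The first claim follows directly from the definitions. The support of $\lambda^{\mu}$ is $\mathcal{M}(\lambda^{\mu})=\{\mu\}$. Any ex ante justified envy in $\lambda^{\mu}$ would therefore have to use $\mu=\mu'$ in the definition, giving students $i,j$ and school $s$ with $\mu(j)=s\,P_i\,\mu(i)$ and $i\succ_s j$. As the paper notes right after the definition, this is exactly a justified envy in $\mu$, which stability rules out. So $\lambda^{\mu}$ is ex ante stable.

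For the second claim I argue by contradiction. Suppose $\mu$ is constrained efficient and $\lambda'$ is an ex ante stable lottery that ordinally dominates $\lambda^{\mu}$. I show there is a stable matching in $\mathcal{M}(\lambda')$ that Pareto dominates $\mu$, in three steps.

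\emph{Step 1: every matching in the support of $\lambda'$ is stable.} If some $\mu'\in\mathcal{M}(\lambda')$ had a justified envy, with $\mu'(j)=s\,P_i\,\mu'(i)$ and $i\succ_s j$, then taking both matchings in the definition equal to $\mu'$ gives an ex ante justified envy in $\lambda'$. This contradicts ex ante stability.

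\emph{Step 2: every matching in the support of $\lambda'$ weakly improves every student.} Fix a student $i$. Under $\lambda^{\mu}$ we have $\bar F(\mu(i),i,\lambda^{\mu})=1$. Weak stochastic dominance therefore forces $\bar F(\mu(i),i,\lambda')=1$, so $\lambda'$ gives positive probability only to matchings $\mu'$ with $\mu'(i)\,R_i\,\mu(i)$. Since this holds for every $i$, each $\mu'\in\mathcal{M}(\lambda')$ satisfies $\mu'(i)\,R_i\,\mu(i)$ for all $i$.

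\emph{Step 3: some matching in the support strictly improves someone.} Strict stochastic dominance for some student $j$ gives a school $c$ with $\bar F(c,j,\lambda')>\bar F(c,j,\lambda^{\mu})$.
\begin{itemize}
\item If $\mu(j)\,R_j\,c$, then $\bar F(c,j,\lambda^{\mu})=1$, which cannot be strictly exceeded.
\item Hence $c\,P_j\,\mu(j)$. Then $\bar F(c,j,\lambda^{\mu})=0$, so $\bar F(c,j,\lambda')>0$.
\end{itemize}
So some $\mu'\in\mathcal{M}(\lambda')$ has $\mu'(j)\,R_j\,c\,P_j\,\mu(j)$.

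Combining the three steps, this $\mu'$ is stable by Step 1 and Pareto dominates $\mu$ by Steps 2 and 3. This contradicts the constrained efficiency of $\mu$.

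The only step needing care is Step 3, that is, turning the distributional inequality into a pointwise strict improvement. This works because $\bar F(\cdot,j,\lambda^{\mu})$ takes only the values $0$ and $1$.
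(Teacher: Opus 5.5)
Your proof is correct. The paper gives no proof and simply calls the lemma immediate from the definitions; your direct verification (the support is $\{\mu\}$ for the first part, and ex ante $\Rightarrow$ ex post stability plus $\bar F(\cdot,i,\lambda^{\mu})\in\{0,1\}$ for the second) is the argument it leaves implicit. Your Step 1 uses only ex post stability of $\lambda'$, so you have actually shown the stronger claim that $\lambda^{\mu}$ is not ordinally dominated by any ex post stable lottery. This is consistent with Example 1, where the dominated lottery is the ETE reassignment of $\lambda^{\mu^{\ast}}$, not $\lambda^{\mu^{\ast}}$ itself.
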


Let $I_{1},\cdots ,I_{N}$ be a partition of $I$ such that $I_{1}\cup \cdots
\cup I_{N}=I$ and $I_{n}\cap I_{m}=\emptyset $ for any $n,m=1,\cdots ,N$
where $n\neq m,$ and for any $n=1,\cdots ,N$ and any $i,j\in I_{n}$, $%
P_{i}=P_{j}$ and $i\sim _{c}j$ for all $c\in C$. That is, for each $%
n=1,\cdots ,N,$ all members of $I_{n}$ are \textit{equals} with respect to
their preferences and the priority orders at every school. Note that we
allow that for $i\in I_{n}$ and $j\in I_{m}$ with $n\neq m,$ $P_{i}=P_{j}$
and $i\sim _{c}j$ for all $c\in C$, for example this may happen when $i$ is
a majority student and $j$ is a minority student. See Okumura (2026) for a
detailed discussion of this point. We refer to each $I_{n}$ as \textbf{a
group of equals}.

A lottery $\lambda \in \Delta \mathcal{M}$ is said to satisfy \textbf{equal
treatment of equals (}hereafter\textbf{\ ETE) }if for each $n=1,\cdots ,N$
and any $i,j\in I_{n}$, $\mathbf{c}\left( i,\lambda \right) =\mathbf{c}%
\left( j,\lambda \right) $. That is, the requirement is that all members of
the same group of equals be assigned the same probability distribution over
schools.

Okumura (2025) provides a simple method for deriving a lottery that
satisfies ETE from a given initial lottery. Let $\pi :I\rightarrow I$ be a
bijection satisfying $\pi \left( i\right) =j$ implies $i,j\in I_{n}$ for $%
n=1,\cdots ,N$. Let $\pi ^{1},\pi ^{2},\cdots ,\pi ^{L}$ be distinct
possible such bijections where 
\begin{equation*}
L=\left\vert I_{1}\right\vert !\times \cdots \times \left\vert
I_{N}\right\vert !.
\end{equation*}%
Moreover, let 
\begin{equation*}
L_{-n}=\left\vert I_{1}\right\vert !\times \cdots \times \left\vert
I_{n-1}\right\vert !\times \left\vert I_{n+1}\right\vert !\times \cdots
\times \left\vert I_{N}\right\vert !.
\end{equation*}%
Fix an arbitrary $\mu \in \mathcal{M}$. We let for $l=1,\cdots ,L$, $\mu
^{l} $ be such that $\mu ^{l}\left( i\right) =\mu \left( \pi ^{l}(i)\right) $
for all $i\in I$.

We let a multiset $\mathcal{M}_{D}\left( \mu \right) =\left\{ \left\{ \mu
^{1},\cdots ,\mu ^{L}\right\} \right\} $ and say that each element of $%
\mathcal{M}_{D}\left( \mu \right) $ is \textbf{derived from} $\mu $.%
\footnote{%
Even if $l\neq l^{\prime },$ $\mu ^{l}=\mu ^{l^{\prime }}$ may be
satisified. Therefore, $Y_{D}\left( y\right) $ is a multiset.} Let $\lambda
\left( \mu \right) \in \Delta \mathcal{M}$ be the lottery such that 
\begin{equation*}
\lambda \left( \mu \right) _{\mu ^{\prime }}=\frac{\left\vert \left\{
l=1,\cdots ,L\text{ }\left\vert \text{ }\mu ^{l}=\mu ^{\prime }\right.
\right\} \right\vert }{L}.
\end{equation*}%
Since $\mu $ is derived from itself, $\lambda \left( \mu \right) _{\mu }\geq
1/L$.

For a given lottery $\lambda $, we say that a lottery $\lambda ^{\prime }$
is the \textbf{ETE reassignment} of $\lambda $ if for all $\mu ^{\prime }\in 
\mathcal{M}$,

\begin{equation*}
\lambda _{\mu ^{\prime }}^{\prime }=\sum\limits_{\mu \in \mathcal{M}}\lambda
_{\mu }\times \lambda \left( \mu \right) _{\mu ^{\prime }}.
\end{equation*}%
To illustrate the ETE reassignment of $\lambda $, consider a simple case in
which $\lambda _{\mu }=1$. Suppose that there are $n\geq 2$ students who
belong to the same equals group. Then, the schools assigned to them under $%
\mu $ may differ across students. In the ETE reassignment of $\lambda $, the
schools assigned to these students under $\mu $ are pooled and then
reassigned uniformly among them.

In fact, Okumura (2025, Lemma 1) shows the following result.

\begin{lemma}
(Okumura, 2025) Let $\lambda ^{\prime }$ be the ETE reassignment of $\lambda 
$. Then, for all $i\in I_{n}$ and all $c\in C$, 
\begin{equation*}
\Pr (c,\mathbf{c}\left( i,\lambda ^{\prime }\right) )=\frac{1}{\left\vert
I_{n}\right\vert }\sum\limits_{j\in I_{n}}\Pr \left( c,\mathbf{c}\left(
j,\lambda \right) \right) .
\end{equation*}
\end{lemma}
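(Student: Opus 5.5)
The claim is Okumura (2025, Lemma 1), and I will prove it directly from the definitions. The computation is mechanical. First I expand the marginal of $i$ under $\lambda^{\prime}$ using the definition of the ETE reassignment. Then I compute the marginal of $i$ under each $\lambda(\mu)$ by a counting argument over the bijections $\pi^{1},\dots,\pi^{L}$. Finally I sum these with weights $\lambda_\mu$.

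\textbf{Step 1: reduce to the lotteries $\lambda(\mu)$.} Fix $i\in I_n$ and $c\in C$. By definition,
\begin{equation*}
\Pr(c,\mathbf{c}(i,\lambda^{\prime}))=\sum_{\mu^{\prime}:\mu^{\prime}(i)=c}\lambda^{\prime}_{\mu^{\prime}}=\sum_{\mu\in\mathcal{M}}\lambda_\mu\sum_{\mu^{\prime}:\mu^{\prime}(i)=c}\lambda(\mu)_{\mu^{\prime}}=\sum_{\mu\in\mathcal{M}}\lambda_\mu\,\Pr(c,\mathbf{c}(i,\lambda(\mu))).
\end{equation*}
Here the sums are simply interchanged. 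The right-hand side of the claim is linear in $\lambda$ in the same way: $\Pr(c,\mathbf{c}(j,\lambda))=\sum_{\mu}\lambda_\mu 1[\mu(j)=c]$. So it suffices to show, for every fixed $\mu$,
\begin{equation*}
\Pr(c,\mathbf{c}(i,\lambda(\mu)))=\frac{1}{|I_n|}\bigl|\{j\in I_n : \mu(j)=c\}\bigr|.
\end{equation*}

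\textbf{Step 2: the counting argument.} The definition of $\lambda(\mu)$ counts multiplicities, so
\begin{equation*}
\Pr(c,\mathbf{c}(i,\lambda(\mu)))=\frac{1}{L}\bigl|\{l : \mu^l(i)=c\}\bigr|=\frac{1}{L}\bigl|\{l:\mu(\pi^l(i))=c\}\bigr|.
\end{equation*}
The set $\{\pi^1,\dots,\pi^L\}$ is exactly the set of bijections of $I$ that preserve each group, which is the product of the symmetric groups on $I_1,\dots,I_N$. Fix a target $j\in I_n$. The number of such bijections with $\pi(i)=j$ is $(|I_n|-1)!\,L_{-n}$: the restriction to $I_n$ must send $i$ to $j$ and is otherwise a free bijection, and the restrictions to the other groups are unconstrained. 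This count equals $L/|I_n|$. Summing over the $j\in I_n$ with $\mu(j)=c$ gives
\begin{equation*}
\bigl|\{l:\mu(\pi^l(i))=c\}\bigr|=\frac{L}{|I_n|}\,\bigl|\{j\in I_n:\mu(j)=c\}\bigr|,
\end{equation*}
which is the required identity. This uses that $\pi^l(i)\in I_n$ always holds, because the bijections preserve groups.

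\textbf{Main obstacle.} The only nontrivial point is Step 2. One must note that the $\pi^l$ enumerate \emph{all} group-preserving bijections, which is forced because there are exactly $L$ of them and they are distinct. One must also count the fiber $\{\pi:\pi(i)=j\}$ correctly. Because the multiset convention is used, coincidences $\mu^l=\mu^{l^{\prime}}$ cause no trouble. The rest of the proof is rearrangement of finite sums.
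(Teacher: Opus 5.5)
Your proof is correct. The paper itself gives no proof of this lemma. It imports the result from Okumura (2025, Lemma 1) and only paraphrases it as a two-stage procedure: realize a matching according to $\lambda$, then pool the assignments within each group of equals and redistribute them uniformly.

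Your argument makes that paraphrase precise:
\begin{itemize}
\item Step 1, interchanging the finite sums to reduce to each $\lambda(\mu)$, is the first stage.
\item Step 2, the fiber count $\bigl|\{l:\pi^l(i)=j\}\bigr|=(|I_n|-1)!\,L_{-n}=L/|I_n|$, is exactly what makes the redistribution uniform.
\end{itemize}
The paper defines $L_{-n}$ but never uses it elsewhere, which suggests the cited proof runs through this same count.

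You also checked the right delicate points. The $\pi^l$ exhaust all group-preserving bijections, since there are $L$ distinct ones. The multiset convention absorbs coincidences $\mu^l=\mu^{l'}$. Nothing is missing.
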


This result shows that the ETE reassignment, which may at first appear to be
complicated, can in fact be obtained by the following simple procedure.
First, a matching $\mu \in \mathcal{M}\left( \lambda \right) $ is realized
according to $\lambda $. Second, for each group of equals, the assignments
received under $\mu $ are pooled and then randomly redistributed among the
members of the group with equal probabilities.

Moreover, we have the following result due to Okumura (2025, Proposition 1).

\begin{lemma}
(Okumura, 2025) For any $\lambda \in \Delta \mathcal{M}$, the ETE
reassignment of $\lambda $ satisfies ETE.
\end{lemma}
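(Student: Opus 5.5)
The statement is Lemma 3 (Okumura, 2025): the ETE reassignment $\lambda'$ of any $\lambda$ satisfies ETE. The plan is to derive it directly from Lemma 2. The point is that the right-hand side of Lemma 2 depends on $i$ only through the group $I_n$ containing $i$.

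Fix a group of equals $I_n$, two members $i,j\in I_n$, and a school $c\in C$. Apply Lemma 2 to $i$ and to $j$. Both probabilities $\Pr(c,\mathbf{c}(i,\lambda'))$ and $\Pr(c,\mathbf{c}(j,\lambda'))$ then equal
\[
\frac{1}{|I_n|}\sum_{k\in I_n}\Pr(c,\mathbf{c}(k,\lambda)).
\]
Hence they coincide for every $c$. Since the random schools $\mathbf{c}(i,\lambda')$ and $\mathbf{c}(j,\lambda')$ are determined by these probabilities over all $c\in C$, they have the same distribution. This holds for every $n$ and every pair in $I_n$, which is exactly ETE.

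If one prefers not to lean on Lemma 2, the main step is to prove that formula directly; I expect this to be the only real work. By linearity of the ETE reassignment in $\lambda$, it suffices to treat a deterministic $\lambda=\lambda^\mu$, where $\Pr(c,\mathbf{c}(i,\lambda(\mu)))=|\{l:\mu(\pi^l(i))=c\}|/L$. Among the $L$ bijections, the number with $\pi^l(i)=k$ is the same for each $k\in I_n$. Indeed, the restriction to $I_n$ ranges over all permutations of $I_n$, exactly $(|I_n|-1)!$ of which send $i$ to $k$, and the other groups contribute the factor $L_{-n}$. So this count is $(|I_n|-1)!\,L_{-n}=L/|I_n|$. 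Summing over $k\in I_n$ with $\mu(k)=c$ gives $\frac{1}{|I_n|}|\{k\in I_n:\mu(k)=c\}|$, which is Lemma 2 for $\lambda^\mu$. Averaging over $\mu$ with weights $\lambda_\mu$ then yields the general case, and ETE follows as above.
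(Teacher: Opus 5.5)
Your proposal is correct and follows the intended route. The paper gives no proof of its own and simply cites Okumura (2025, Proposition 1), but it states this lemma right after Lemma 2 because ETE follows at once: the right-hand side of Lemma 2 depends on $i$ only through its group $I_n$. Your self-contained verification of Lemma 2 is also sound, since for each $k\in I_n$ exactly $(|I_n|-1)!\,L_{-n}=L/|I_n|$ of the bijections $\pi^l$ satisfy $\pi^l(i)=k$, and linearity of the reassignment in $\lambda$ handles the general case.
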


\section{Ex ante stability}

First, we consider the ETE reassignment of an ex ante stable lottery. We
have the following result.

\begin{proposition}
If $\lambda $ is an ex ante stable lottery, then the ETE reassignment of $%
\lambda $ is also ex ante stable.
\end{proposition}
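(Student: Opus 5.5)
The plan is to argue by contradiction, working entirely at the level of supports. Let $\lambda'$ be the ETE reassignment of $\lambda$. By construction, $\lambda'_{\mu'}>0$ only if $\lambda(\mu)_{\mu'}>0$ for some $\mu\in\mathcal{M}(\lambda)$. So every $\nu\in\mathcal{M}(\lambda')$ is derived from some $\mu\in\mathcal{M}(\lambda)$, meaning $\nu(i)=\mu(\pi(i))$ for all $i$, for some group-preserving bijection $\pi$. Suppose $\lambda'$ has ex ante justified envy. Then there are students $i,j$, a school $s$, and matchings $\nu,\nu'\in\mathcal{M}(\lambda')$ with $sP_i\nu(i)$, $i\succ_s j$ and $\nu'(j)=s$. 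Write $\nu=\mu^{\pi}$ and $\nu'=\mu'^{\pi'}$ with $\mu,\mu'\in\mathcal{M}(\lambda)$.

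Next, translate the envy back to $\lambda$ through the bijections. Put $k=\pi(i)$ and $k'=\pi'(j)$. Since $\pi$ and $\pi'$ preserve groups, $k$ lies in the same group of equals as $i$, and $k'$ lies in the same group as $j$. The key step uses the definition of equals: $P_k=P_i$, and $k\sim_c i$ and $k'\sim_c j$ for all $c$. We have $\mu(k)=\nu(i)$ and $sP_i\nu(i)$, so $sP_k\mu(k)$. We have $\mu'(k')=\nu'(j)=s$. For priorities, $i\succ_s j$ together with $k\sim_s i$ and $j\sim_s k'$ gives $k\succ_s k'$ by transitivity of the weak order $\succsim_s$.

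These conditions form an ex ante justified envy in $\lambda$: $k$ and $k'$, school $s$, and $\mu,\mu'\in\mathcal{M}(\lambda)$. One detail needs care: we could have $k=k'$, in which case the relation $k\succ_s k'$ would be impossible. But $k\succ_s k'$ is strict, so $k\neq k'$ automatically. This contradicts the ex ante stability of $\lambda$.

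I expect no serious obstacle. The only points needing care are:
\begin{itemize}
\item the support statement, namely that every matching in $\mathcal{M}(\lambda')$ is derived from some matching in $\mathcal{M}(\lambda)$;
\item the priority transfer, which relies on ties within a group holding at every school, and in particular at $s$.
\end{itemize}
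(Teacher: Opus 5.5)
Your proof is correct and follows essentially the same route as the paper: you pull back the two support matchings of the ETE reassignment to matchings in $\mathcal{M}(\lambda)$ via group-preserving bijections, and you transfer the envy to the equals $k=\pi(i)$ and $k'=\pi'(j)$ using $P_k=P_i$ and $k\sim_s i\succ_s j\sim_s k'$. Your explicit justification of the support claim and of $k\neq k'$ fills in details that the paper leaves implicit.
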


\textbf{Proof. }Let\textbf{\ }$\lambda ^{\prime }$ be the ETE reassignment
of $\lambda $. Suppose not; that is, $\lambda ^{\prime }$ is not ex ante
stable. Then, there are two students $i$ and $j$, school $s$ and matchings $%
\mu ,\mu ^{\prime }\in \mathcal{M}\left( \lambda ^{\prime }\right) ,$ such
that $sP_{i}\mu \left( i\right) $, $i\succ _{s}j,$ and $\mu ^{\prime }\left(
j\right) =s$. Let $n,m=1,\cdots ,N$ be such that $i\in I_{n}$ and $j\in I_{m}
$. Since $\lambda ^{\prime }$ is the ETE reassignment of $\lambda ,$ $\mu $
and $\mu ^{\prime }$ are derived from matchings $\bar{\mu}$ and $\bar{\mu}%
^{\prime }$ that are included in $\mathcal{M}\left( \lambda \right) ,$
respectively. Then, there exist $i^{\prime }\in I_{n}$ and $j^{\prime }\in
I_{m}$ such that $\bar{\mu}(i^{\prime })=\mu \left( i\right) $ and $\bar{\mu}%
^{\prime }\left( j^{\prime }\right) =s$. Then, since $i^{\prime }\in I_{n}$
and $j^{\prime }\in I_{m}$, $sP_{i^{\prime }}\bar{\mu}(i^{\prime })\,$and $%
i^{\prime }\succ _{s}j^{\prime }$ because%
\begin{equation*}
i^{\prime }\sim _{s}i\succ _{s}j\sim _{s}j^{\prime }.
\end{equation*}%
However, these facts contradict that $\lambda $ has no ex ante justified
envy. Thus, $\lambda ^{\prime }$\textbf{\ }is ex ante stable.\textbf{\ Q.E.D.%
}\newline

Second, we consider the ETE reassignment of a more specific ex ante stable
lottery.

\begin{theorem}
If $\mu ^{\ast }$ is a constrained efficient matching, then the ETE
reassignment of $\lambda ^{\mu ^{\ast }}$ is an ex ante stable lottery that
is not ordinally dominated by any other ex ante stable lottery and satisfies
ETE.
\end{theorem}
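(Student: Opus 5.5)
Write $\lambda^{\prime }$ for the ETE reassignment of $\lambda ^{\mu ^{\ast }}$. Two of the three claims follow at once. By Lemma 1, $\lambda ^{\mu ^{\ast }}$ is ex ante stable, so $\lambda ^{\prime }$ is ex ante stable by Proposition 1. By Lemma 3, $\lambda ^{\prime }$ satisfies ETE. The substantive part is the non-domination claim, which I will prove by contradiction: suppose some ex ante stable lottery $\lambda ^{\prime \prime }$ ordinally dominates $\lambda ^{\prime }$.

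Step 1 (the support of $\lambda ^{\prime }$ consists of constrained efficient matchings). Every $\mu ^{l}\in \mathcal{M}_{D}(\mu ^{\ast })$ permutes assignments only within groups of equals. Since equals have identical preferences and are tied at every school, any justified envy in $\mu ^{l}$ would carry back to a justified envy in $\mu ^{\ast }$, exactly as in the proof of Proposition 1; hence each $\mu ^{l}$ is stable. If some stable $\nu $ Pareto dominated $\mu ^{l}$, then applying the inverse permutation $(\pi ^{l})^{-1}$ to $\nu $ gives a stable matching that Pareto dominates $\mu ^{\ast }$, a contradiction. So each $\mu ^{l}$ is constrained efficient.

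Step 2 (a convex-combination argument). Since $\lambda ^{\prime }=\frac{1}{L}\sum_{l}\lambda ^{\mu ^{l}}$, we have $\bar{F}(c,i,\lambda ^{\prime })=\frac{1}{L}\sum_{l}\bar{F}(c,i,\lambda ^{\mu ^{l}})$. I would like to show that some single $\mu ^{l}$ is ordinally dominated by an ex ante stable lottery, contradicting Lemma 1. A direct averaging argument does not give this, because domination of an average does not imply domination of one term. Instead I would symmetrize $\lambda ^{\prime \prime }$. Let $\hat{\lambda}$ be its ETE reassignment. By Proposition 1, $\hat{\lambda}$ is ex ante stable. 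By Lemma 2, the marginals of $\hat{\lambda}$ are the group averages of those of $\lambda ^{\prime \prime }$. The cumulative functions $\bar{F}$ are linear in the marginals, and all members of a group share the same preference, so averaging over the group preserves weak stochastic dominance. It also preserves strictness: a strict inequality for one member survives in the group average. Hence $\hat{\lambda}$ ordinally dominates $\lambda ^{\prime }$, and both lotteries now have group-constant marginals. Moreover, the marginal of $i\in I_{n}$ under $\lambda ^{\prime }$ equals the average over $j\in I_{n}$ of the deterministic assignments $\mu ^{\ast }(j)$.

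Step 3 (main obstacle: reducing to $\mu ^{\ast }$ itself). It remains to contradict Lemma 1 for $\mu ^{\ast }$, i.e.\ to show that $\hat{\lambda}$, or some modification of it, ordinally dominates $\lambda ^{\mu ^{\ast }}$. Two routes are available.

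\begin{itemize}
\item[(a)] Rerun the argument behind Lemma 1 for $\lambda ^{\prime }$ directly. If an ex ante stable lottery weakly dominates, then every matching in its support is stable (ex ante stability implies ex post stability). The expected sum of ranks is then no larger, and strictly smaller for someone.

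\item[(b)] Use the group-averaged sum. For each group $I_{n}$ and each $c$, the quantity $\sum_{j\in I_{n}}\bar{F}(c,j,\cdot )$ is the expected number of group members assigned to schools in the upper contour set of $c$. This quantity is weakly larger under $\hat{\lambda}$ than under $\lambda ^{\mu ^{\ast }}$. I would then look for a matching $\nu \in \mathcal{M}(\hat{\lambda})$ that, after a within-group relabeling, Pareto dominates $\mu ^{\ast }$.
\end{itemize}

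I expect the hard point to be the structure of ex ante stable lotteries. Ex ante stability prevents any student from being assigned, in any support matching, to a school where a higher-priority student is placed worse in another support matching. I would show that this forces all support matchings of $\hat{\lambda}$ to coincide with $\mu ^{\ast }$ up to within-group relabeling, or else to Pareto improve it. To do this, I would take a student whose cumulative probability strictly increases, follow the chain of displaced seats, and obtain either an improving cycle (a stable Pareto improvement of $\mu ^{\ast }$) or an ex ante justified envy. This chain argument is where I would concentrate the effort.
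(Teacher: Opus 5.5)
Your treatment of ex ante stability and ETE is the paper's (Lemma 1 with Proposition 1, and Lemma 3), and Steps 1--2 are correct. However, the non-domination claim, which is the real content of the theorem, is never proved: Step 3 only lists routes and defers the decisive ``chain argument.''

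Route (a) cannot work even in principle. It uses only that the support matchings of the dominating lottery are stable (i.e.\ ex post stability) plus a rank-sum comparison. A smaller expected rank sum does not contradict constrained efficiency of $\mu^{\ast}$: in Example 1 the stable $\bar{\mu}$ has $R(\lambda^{\bar{\mu}})=8<11=R(\lambda^{\mu^{\ast}})$. Moreover, Example 1 shows that the ETE reassignment of a constrained efficient matching can be ordinally dominated by an ex post stable lottery, even though its support consists of constrained efficient matchings as in your Step 1. The argument behind Lemma 1 does not transfer. For a deterministic $\lambda^{\mu}$, domination forces every realization to weakly Pareto improve $\mu$. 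For $\lambda'$ it only constrains expectations, so a realization may give a group fewer good seats than $\mu^{\ast}$ does.

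Route (b) is a statement of intent, and its structural target is wrong as stated. In Example 1, take $\nu(i)=\nu(i')=a$, $\nu(j)=\nu(j')=c$ and $\nu'(i)=\nu'(i')=c$, $\nu'(j)=\nu'(j')=a$, with $\nu(k)=\nu'(k)=d$ and $\nu(l)=\nu'(l)=b$. The lottery $\frac{1}{2}\lambda^{\nu}+\frac{1}{2}\lambda^{\nu'}$ is ex ante stable with exactly the marginals of $\lambda'$. Yet $\nu$ is neither a within-group relabeling of $\mu^{\ast}$ nor a Pareto improvement of one. So any such argument must extract the contradiction from the strict part of the domination, and the proposal gives no mechanism for that.

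The missing idea, which is what the paper uses, is to avoid analyzing the dominating lottery at all and work with Kesten and \"{U}nver's characterization (Lemma 4). The relations $\gtrdot^{\lambda}$ and $\blacktriangleright^{\lambda}$ depend only on which pairs $(i,c)$ have positive probability. Under $\lambda'$, a pair $(i,c)$ has positive probability iff $\mu^{\ast}(i')=c$ for some equal $i'$ of $i$.

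If $\lambda'$ were dominated, Lemma 4 would give an ex ante stable improvement cycle $(i_1,c_1,\ldots,i_M,c_M)$ at $\lambda'$. Replace each $i_m$ by such an equal $i_m'$.
\begin{itemize}
\item Since equals share preferences, $(i_m',c_m)\gtrdot^{\lambda^{\mu^{\ast}}}(i_{m+1}',c_{m+1})$.
\item Since $\mu^{\ast}$ lies in the support of $\lambda'$, every $(k,c')\gtrdot^{\lambda^{\mu^{\ast}}}(i_{m+1}',c_{m+1})$ also satisfies $(k,c')\gtrdot^{\lambda'}(i_{m+1},c_{m+1})$.
\item Since equals share priorities, the highest-priority condition transfers from $i_m$ to $i_m'$, so $(i_m',c_m)\blacktriangleright^{\lambda^{\mu^{\ast}}}(i_{m+1}',c_{m+1})$.
\end{itemize}
The resulting closed walk contains an ex ante stable improvement cycle at $\lambda^{\mu^{\ast}}$ (in effect an Erdil--Ergin stable improvement cycle at $\mu^{\ast}$), contradicting Lemmas 1 and 4.

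Your ``chain of displaced seats'' would amount to reproving from scratch the direction of Lemma 4 that produces a cycle from a dominating lottery. Without that, or an equivalent argument, the proof is incomplete.
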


\textbf{Proof. }Let $\lambda ^{\ast }$ be the ETE reassignment of $\lambda
^{\mu ^{\ast }}$. First, by Lemma 1 and Proposition 1, $\lambda ^{\ast }$ is
ex ante stable. Second, by Lemma 3, $\lambda ^{\ast }$ must satisfy ETE.

We show that $\lambda ^{\ast }$ is not ordinally dominated by any other ex
ante stable lottery. We use Proposition 5 of Kesten and \"{U}nver (2015).
For a lottery $\lambda $, if there are two students $i$ and $j$ and two
schools $c$ and $d$ such that 
\begin{equation*}
dP_{i}c,\text{ }\Pr (c,\mathbf{c}\left( i,\lambda \right) )>0,\text{ and }%
\Pr (d,\mathbf{c}\left( j,\lambda \right) )>0,
\end{equation*}%
then we write%
\begin{equation*}
\left( i,c\right) \gtrdot ^{\lambda }\left( j,d\right) .
\end{equation*}%
Moreover, for $\lambda $, if there are two students $i$ and $j$ and two
schools $c$ and $d$ such that $\left( i,c\right) \gtrdot ^{\lambda }\left(
j,d\right) $ and $i\succsim _{d}k$ for all $(k,c^{\prime })$ satisfying $%
(k,c^{\prime })\gtrdot ^{\lambda }\left( j,d\right) $, then we write%
\begin{equation*}
\left( i,c\right) \blacktriangleright ^{\lambda }\left( j,d\right) .
\end{equation*}%
Let an \textbf{ex ante stable improvement cycle} $(i_{1},c_{1},\ldots
,i_{M},c_{M})$ at $\lambda $ be a list of distinct student-school pairs such
that 
\begin{equation*}
\left( i_{1},c_{1}\right) \blacktriangleright ^{\lambda }\left(
i_{2},c_{2}\right) \blacktriangleright ^{\lambda }\cdots \blacktriangleright
^{\lambda }\left( i_{M},c_{M}\right) \blacktriangleright ^{\lambda }\left(
i_{1},c_{1}\right) .
\end{equation*}

\begin{lemma}
(Kesten and \"{U}nver 2015, Proposition 5) An ex ante stable lottery $%
\lambda $ is not ordinally dominated by any other ex ante stable lottery if
and only if there exists no ex ante stable improvement cycle at $\lambda $.
\end{lemma}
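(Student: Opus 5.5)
We plan to prove the two directions separately. Throughout we work with the random matching induced by a lottery, i.e.\ the matrix $(\Pr(c,\mathbf{c}(i,\lambda)))_{i,c}$. Ex ante stability and ordinal dominance depend on $\lambda$ only through this matrix (and its positive entries). Any bistochastic-type matrix respecting the quotas can be realized by some lottery over $\mathcal{M}$ via the Birkhoff--von Neumann decomposition. We will use this freely to pass back and forth between lotteries and random matchings.

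\emph{Only if.} Suppose there is an ex ante stable improvement cycle $(i_1,c_1,\ldots,i_M,c_M)$ at $\lambda$. Since $(i_k,c_k)\gtrdot^{\lambda}(i_{k+1},c_{k+1})$, each $i_k$ has positive probability at $c_k$ and prefers $c_{k+1}$ (indices mod $M$). For small $\varepsilon>0$, define a new random matching: each $i_k$ loses $\varepsilon$ probability at $c_k$ and gains $\varepsilon$ at $c_{k+1}$. School $c_k$ loses $\varepsilon$ from $i_k$ and gains $\varepsilon$ from $i_{k-1}$, so the expected load of every school is unchanged. Hence the new matrix is feasible and is induced by some lottery $\lambda'$. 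Because each $i_k$ shifts mass upward, $\lambda'$ ordinally dominates $\lambda$. The key check is that $\lambda'$ is ex ante stable. The only new positive entries are $(i_k,c_{k+1})$, so new justified envy could only be of the form: some $k'$ with positive probability at a school worse than $c_{k+1}$ and $k'\succ_{c_{k+1}} i_k$. But for $\varepsilon$ small, $k'$ already had positive probability there under $\lambda$, so $(k',\cdot)\gtrdot^{\lambda}(i_{k+1},c_{k+1})$. The $\blacktriangleright$ condition then gives $i_k\succsim_{c_{k+1}}k'$, a contradiction. Reductions in probability create no envy.

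\emph{If.} Suppose $\lambda'$ is ex ante stable and ordinally dominates $\lambda$. We will extract an improvement cycle at $\lambda$, and this is the step we expect to be the main obstacle. The plan is to take the difference matrix $\Delta=\Pr(\cdot,\lambda')-\Pr(\cdot,\lambda)$. By stochastic dominance, every student $i$ whose row changes has a school $c$ with $\Delta_{ic}<0$ and a strictly better school $d$ with $\Delta_{id}>0$; we choose $d$ to be the most preferred school with positive change. Since the schools' loads are balanced (by nonwastefulness and $\sum q_c\ge |I|$, or after adding a null row), $\Delta_{id}>0$ forces some $j$ with $\Delta_{jd}<0$, hence $\Pr(d,\mathbf{c}(j,\lambda))>0$. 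So $(i,c)\gtrdot^{\lambda}(j,d)$, and we continue from $j$. Finiteness gives a cycle of $\gtrdot^{\lambda}$ relations.

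It remains to upgrade $\gtrdot^{\lambda}$ to $\blacktriangleright^{\lambda}$ along this cycle. To do so, among candidate successors we choose the edge so that the student entering $d$ is a $\succsim_d$-maximal one among those with positive $\lambda'$-probability gain at $d$. Then, if some $k$ with $(k,c')\gtrdot^{\lambda}(j,d)$ had $k\succ_d i$, we would use that $i$ has positive $\lambda'$-probability at $d$. Together with dominance this should keep $k$ at a school worse than $d$ with positive $\lambda'$-probability, yielding ex ante justified envy in $\lambda'$, a contradiction. Making this choice consistent around the whole cycle, so that the cycle consists of distinct pairs each satisfying $\blacktriangleright^{\lambda}$, is the delicate bookkeeping. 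We would handle it by restricting the graph to maximal-priority edges and applying a standard out-degree argument, since every vertex still has an outgoing edge.
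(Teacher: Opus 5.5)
\textbf{Verdict: genuine gap in the ``if'' direction.} The paper gives no proof of this lemma (it only cites Kesten and \"{U}nver), so your argument has to stand on its own. Your ``only if'' half is essentially right. The ``if'' half breaks exactly where you upgrade $\gtrdot^{\lambda}$ to $\blacktriangleright^{\lambda}$.

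You claim that dominance keeps the higher-priority student $k$ at a school worse than $d$ under $\lambda'$. This is false. $k$ can move all of her mass from below $d$ to schools strictly above $d$. Then $\lambda'$ has no ex ante justified envy at $d$, even though $k\succ_d i$. Consequently a vertex can have no outgoing $\blacktriangleright^{\lambda}$-edge, and your forward out-degree argument fails.

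Here is a counterexample. Take students $1,2,3$ and schools $s,t,u$ with unit quotas. Preferences are $P_1$: $s,t,u$; $P_2$: $t,s,u$; $P_3$: $u,s,t$. Priorities are $3\succ_s 2\succ_s 1$, $1\succ_t 2$, $2\succ_u 3$. The matching $\mu$ with $\mu(1)=t$, $\mu(2)=u$, $\mu(3)=s$ is stable and nonwasteful. It is Pareto dominated by the stable $\nu$ with $\nu(1)=s$, $\nu(2)=t$, $\nu(3)=u$.
\begin{itemize}
\item Your forward rule produces the $\gtrdot$-cycle $(1,t),(3,s),(2,u)$.
\item But $(1,t)$ has no outgoing $\blacktriangleright^{\lambda^{\mu}}$-edge at all. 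The only school $1$ prefers to $t$ is $s$, and $(2,u)\gtrdot^{\lambda^{\mu}}(3,s)$ with $2\succ_s 1$.
\item Under $\nu$, student $2$ sits at $t$, above $s$, so ex ante stability of $\lambda^{\nu}$ gives no contradiction.
\item The actual improvement cycle is $(2,u)\blacktriangleright(3,s)\blacktriangleright(2,u)$.
\end{itemize}

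\textbf{The missing idea is to work with incoming edges.} Take as vertices the pairs $(j,d)$ with $\Delta_{jd}<0$. Let $E_d$ be the set of students with positive $\lambda$-probability at some school worse than $d$; these are exactly the $k$ with $(k,c')\gtrdot^{\lambda}(j,d)$ for some $c'$.
\begin{enumerate}
\item Column balance gives a student $g$ with $\Delta_{gd}>0$.
\item Dominance gives $\bar F(d,g,\lambda')-\bar F(d,g,\lambda)\geq\Delta_{gd}>0$. So $g$ loses mass strictly below $d$, and $g\in E_d$.
\item Let $k$ be $\succsim_d$-maximal in $E_d$. If $k\sim_d g$, use $g$.
\item If $k\succ_d g$, then ex ante stability of $\lambda'$ (where $g$ has positive probability at $d$) forces $k$ to have no $\lambda'$-mass below $d$. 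Hence $\Delta_{kc'}<0$ for some $c'$ with $dP_kc'$.
\item Either way, some vertex $(k,c')$ satisfies $(k,c')\blacktriangleright^{\lambda}(j,d)$. A backward walk therefore closes a cycle, and its minimal segment consists of distinct pairs.
\end{enumerate}

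\textbf{Two smaller points.}

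First, your balance step uses ex ante nonwastefulness. That is part of Kesten and \"{U}nver's definition of ex ante stability, but not of this paper's. Without it the ``if'' direction is actually false. Take one student with $aP_ib$ and unit quotas: the lottery assigning $b$ is ex ante stable and has no $\gtrdot$-relation, yet it is dominated by the lottery assigning $a$. A null row does not rescue this, since the chain can exit through the dummy. So state that you adopt their definition. With nonwastefulness, every school with a gainer is full under $\lambda$ (the gainer had mass below it). Hence all column sums of $\Delta$ are nonpositive, and so equal to zero.

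Second, in the ``only if'' half you must also rule out envy originating from a new entry $(i_k,c_{k+1})$. Your claim that $k'$'s low entry was already present does not follow from $\varepsilon$ being small. Both points follow instead from the fact that each new entry $(i_k,c_{k+1})$ lies above the old entry $(i_k,c_k)$ of the same student.
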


Now, we show that $\lambda ^{\ast }$ is not ordinally dominated by any other
ex ante stable lottery. Suppose not; that is, $\lambda ^{\ast }$ is
ordinally dominated by some ex ante stable lottery. Then, by Lemma 4, there
exists an ex ante stable improvement cycle at $\lambda ^{\ast }$. Let $%
(i_{1},c_{1},\ldots ,i_{M},c_{M})$ be the cycle. Since $\Pr (c_{m},\mathbf{c}%
\left( i_{m},\lambda ^{\ast }\right) )>0$ for all $m=1,\cdots ,M$, there is $%
\mu _{m}\in \mathcal{M}\left( \lambda ^{\ast }\right) $ such that $\mu
_{m}\left( i_{m}\right) =c_{m}$ for all $m=1,\cdots ,M$. We arbitrary fix $%
m=1,\cdots ,M$ and let $i_{m}\in I_{n}$ for $n=1,\cdots ,N$. Since $\lambda
^{\ast }$ is the ETE reassignment of $\lambda ^{\mu ^{\ast }}$, $\mu _{m}$
is derived from $\mu ^{\ast }$ and therefore there is $i_{m}^{\prime }\in
I_{n}$ such that $\mu ^{\ast }\left( i_{m}^{\prime }\right) =c_{m}$. Thus, 
\begin{equation*}
\left( i_{m}^{\prime },c_{m}\right) \gtrdot ^{\lambda ^{\ast }}\left(
i_{m+1}^{\prime },c_{m+1}\right)
\end{equation*}%
for all $m=1,\cdots ,M,$ where $\left( i_{M+1}^{\prime },c_{M+1}\right)
=\left( i_{1}^{\prime },c_{1}\right) $.

We show that $i_{m}^{\prime }\succsim _{c_{m+1}}j$ for all $(j,c^{\prime })$
satisfying $(j,c^{\prime })\gtrdot ^{\lambda ^{\mu ^{\ast }}}\left(
i_{m+1}^{\prime },c_{m+1}\right) $. Suppose not; that is, there is $%
(j,c^{\prime })$ satisfying $(j,c^{\prime })\gtrdot ^{\lambda ^{\mu ^{\ast
}}}\left( i_{m+1}^{\prime },c_{m+1}\right) \,$and $j\succ
_{c_{m+1}}i_{m}^{\prime }$. Then, $\mu ^{\ast }(j)=c^{\prime }$. Since $%
\lambda ^{\ast }$ is the ETE reassignment of $\lambda ^{\mu ^{\ast }}$,
there is $j^{\prime }$ who belongs to the same group of equals to $j$ and $%
\mu _{m}(j^{\prime })=c^{\prime }$ and $c_{m+1}P_{j^{\prime }}c^{\prime }$.
Moreover, since $j^{\prime }\sim _{c_{m+1}}j$ and $i_{m}^{\prime }\sim
_{c_{m+1}}i_{m},$ $j^{\prime }\succ _{c_{m+1}}i_{m}.$ These facts contradict
that $\left( i_{m},c_{m}\right) \blacktriangleright ^{\lambda ^{\mu ^{\ast
}}}\left( i_{m+1},c_{m+1}\right) $. Therefore, $i_{m}^{\prime }\succsim
_{c_{m+1}}j$ for all $(j,c^{\prime })$ satisfying $(j,c^{\prime })\gtrdot
^{\lambda ^{\mu ^{\ast }}}\left( i_{m+1}^{\prime },c_{m+1}\right) $.

Hence 
\begin{equation*}
\left( i_{m}^{\prime },c_{m}\right) \blacktriangleright ^{\lambda ^{\mu
^{\ast }}}\left( i_{m+1}^{\prime },c_{m+1}\right) ,
\end{equation*}%
for all $m=1,\cdots ,M,$ where $\left( i_{M+1}^{\prime },c_{M+1}\right)
=\left( i_{1}^{\prime },c_{1}\right) $. If $\left( i_{1}^{\prime
},c_{1}\right) ,$ $\left( i_{2}^{\prime },c_{2}\right) ,\ldots ,\left(
i_{M}^{\prime },c_{M}\right) $ are distinct, then they form an ex ante
stable improvement cycle at $\lambda ^{\mu ^{\ast }}$. If they are not
distinct, then the closed sequence still contains an ex ante stable
improvement cycle at $\lambda ^{\mu ^{\ast }}$. This contradicts Lemmas 1
and 4, which imply that there is no ex ante stable improvement cycle at $%
\lambda ^{\mu ^{\ast }}$. \textbf{Q.E.D.}\newline

In summary, the following simple method yields an ex ante stable assignment
that is not ordinally dominated by any other ex ante lottery and satisfies
equal treatment of equals. First, we derive a constrained stable matching $%
\mu ^{\ast }$, for example by using the stable improvement cycle mechanism
of Erdil and Ergin (2008) or the efficiency-adjusted deferred acceptance
mechanism of Kesten (2010). Second, we derive the ETE reassignment of $%
\lambda ^{\mu ^{\ast }}$.

\section{Ex post stability}

In this section, we consider a weaker notion of stability. A lottery $%
\lambda $ is said to be \textbf{ex post stable} if every matching in $%
\mathcal{M}\left( \lambda \right) $ is stable. If a lottery $\lambda $ is ex
ante stable, then it is also ex post stable.

We now reconsider the lottery used in Theorem 1. Let $\mu ^{\ast }$ be a
constrained efficient matching. We ask whether the ETE reassignment of $%
\lambda ^{\mu ^{\ast }}$ is not dominated by any \textit{ex post} stable
lottery.\footnote{%
Aziz et al. (2026) study a related problem: improving a random matching
while preserving ex post stability. Although their framework can accommodate
ETE, the resulting random matching need not be ex ante stable. Moreover,
their approach relies on computationally demanding optimization methods,
rather than providing a polynomial-time construction.} The following example
shows that this need not be the case: for some school-choice problem and
some constrained efficient matching $\mu ^{\ast }$, the ETE reassignment of $%
\lambda ^{\mu ^{\ast }}$ is dominated by an ex post stable lottery.\footnote{%
Example 1 is suggested by Minoru Kitahara, and I am grateful for his
contribution.}

\subsubsection*{Example 1}

Let the school choice problem be such that $I_{1}=\left\{ i,i^{\prime
}\right\} $, $I_{2}=\left\{ j,j^{\prime }\right\} $, $I_{3}=\left\{
k\right\} $, $I_{4}=\left\{ l\right\} ,$ $C=\left\{ a,b,c,d\right\} $, $%
q_{a}=q_{c}=2$, and $q_{b}=q_{d}=1$. The students' preferences are given in
the following table, where each column lists the schools in descending order
of preference.

\begin{center}
\begin{tabular}{llllll}
$P_{i}$ & $P_{i^{\prime }}$ & $P_{j}$ & $P_{j^{\prime }}$ & $P_{k}$ & $P_{l}$
\\ \hline
$a$ & $a$ & $a$ & $a$ & $b$ & $d$ \\ 
$b$ & $b$ & $c$ & $c$ & $d$ & $b$ \\ 
$c$ & $c$ & $b$ & $b$ & $a$ & $a$ \\ 
$d$ & $d$ & $d$ & $d$ & $c$ & $c$%
\end{tabular}
\end{center}

The priority orders of schools are as follows: 
\begin{gather*}
i\sim _{a}i^{\prime }\sim _{a}j\sim _{a}j^{\prime }, \\
l\succ _{b}i\sim _{b}i^{\prime }\succ _{b}k, \\
k\succ _{d}l.
\end{gather*}%
The remaining priority comparisons can be completed arbitrarily, provided
that they do not affect the comparisons specified above.

First, let $\mu ^{\ast }$ be such that 
\begin{equation*}
\mu ^{\ast }\left( i\right) =\mu ^{\ast }\left( j\right) =a,\text{ }\mu
^{\ast }\left( i^{\prime }\right) =\mu ^{\ast }\left( j^{\prime }\right) =c,%
\text{ }\mu ^{\ast }\left( k\right) =d,\text{ }\mu ^{\ast }\left( l\right) =b%
\text{.}
\end{equation*}%
It is straightforward to show that $\mu ^{\ast }$ is constrained efficient.
We derive the ETE reassignment of $\lambda ^{\mu ^{\ast }}$. Then, $\mu
^{1},\ldots ,\mu ^{4}$ that satisfy$\ \mu ^{1}=\mu ^{\ast }$, 
\begin{eqnarray*}
\mu ^{2}\left( i^{\prime }\right)  &=&\mu ^{2}\left( j\right) =\mu
^{3}\left( i^{\prime }\right) =\mu ^{3}\left( j^{\prime }\right) =\mu
^{4}\left( i\right) =\mu ^{4}\left( j^{\prime }\right) =a, \\
\mu ^{2}\left( i\right)  &=&\mu ^{2}\left( j^{\prime }\right) =\mu
^{3}\left( i\right) =\mu ^{3}\left( j\right) =\mu ^{4}\left( i^{\prime
}\right) =\mu ^{4}\left( j\right) =c, \\
\mu ^{n}\left( k\right)  &=&d\text{ and }\mu ^{n}\left( l\right) =b,\text{
for all }n=1,\ldots ,4,
\end{eqnarray*}%
are derived from $\mu ^{\ast }$. Therefore, the ETE reassignment of $\lambda
^{\mu ^{\ast }}$ denoted by $\lambda ^{\ast }$ satisfies 
\begin{equation*}
\lambda ^{\ast }\left( \mu ^{n}\right) =0.25\text{ for all }n=1,\ldots ,4.
\end{equation*}%
Thus,%
\begin{eqnarray*}
\Pr (a,\mathbf{c}\left( i,\lambda ^{\ast }\right) ) &=&\Pr (a,\mathbf{c}%
\left( i^{\prime },\lambda ^{\ast }\right) )=\Pr (c,\mathbf{c}\left(
i,\lambda ^{\ast }\right) )=\Pr (c,\mathbf{c}\left( i^{\prime },\lambda
^{\ast }\right) )=\frac{1}{2}, \\
\Pr (a,\mathbf{c}\left( j,\lambda ^{\ast }\right) ) &=&\Pr (a,\mathbf{c}%
\left( j^{\prime },\lambda ^{\ast }\right) )=\Pr (c,\mathbf{c}\left(
j,\lambda ^{\ast }\right) )=\Pr (c,\mathbf{c}\left( j^{\prime },\lambda
^{\ast }\right) )=\frac{1}{2}.
\end{eqnarray*}

On the other hand, let 
\begin{eqnarray*}
\bar{\mu}\left( i\right) &=&\bar{\mu}\left( i^{\prime }\right) =a,\text{ }%
\bar{\mu}\left( j\right) =\bar{\mu}\left( j^{\prime }\right) =c,\text{ }\bar{%
\mu}\left( k\right) =b,\text{ }\bar{\mu}\left( l\right) =d\text{,} \\
\bar{\mu}^{\prime }\left( i\right) &=&\bar{\mu}^{\prime }\left( i^{\prime
}\right) =c,\text{ }\bar{\mu}^{\prime }\left( j\right) =\bar{\mu}^{\prime
}\left( j^{\prime }\right) =a,\text{ }\bar{\mu}^{\prime }\left( k\right) =d,%
\text{ }\bar{\mu}^{\prime }\left( l\right) =b\text{.}
\end{eqnarray*}%
Then, $\bar{\mu}$ and $\bar{\mu}^{\prime }$ are stable. Let $\lambda ^{\ast
\ast }$ be such that $\lambda _{\bar{\mu}}^{\ast \ast }=\lambda _{\bar{\mu}%
}^{\ast \ast }=0.5$. Then, $\lambda ^{\ast \ast }$ has ex ante justified
envy, because $\bar{\mu}^{\prime }\left( i\right) =c$, $\bar{\mu}\left(
k\right) =b,$ $bP_{i}c$, and $i\succ _{b}k$. Therefore, $\lambda ^{\ast \ast
}$ is ex post stable but not ex ante stable.

Moreover, since $bP_{k}d$ and $dP_{l}b$, $\lambda ^{\ast }$ is ordinally
dominated by $\lambda ^{\ast \ast }$. Therefore, the ETE reassignment of a
lottery that assigns probability one to a constraint efficient matching is
dominated by some ex post stable lottery.

Note that the result of the FDAT introduced by Kesten and \"{U}nver (2015)
may also be dominated by some ex post stable lottery. In Example 1, the FDAT
results in a lottery $\lambda ^{\prime }$ satisfies 
\begin{eqnarray*}
\Pr (a,\mathbf{c}\left( i,\lambda ^{\prime }\right) ) &=&\Pr (a,\mathbf{c}%
\left( i^{\prime },\lambda ^{\prime }\right) )=\Pr (c,\mathbf{c}\left(
i,\lambda ^{\prime }\right) )=\Pr (c,\mathbf{c}\left( i^{\prime },\lambda
^{\prime }\right) )=\frac{1}{2}, \\
\Pr (a,\mathbf{c}\left( j,\lambda ^{\prime }\right) ) &=&\Pr (a,\mathbf{c}%
\left( j^{\prime },\lambda ^{\prime }\right) )=\Pr (c,\mathbf{c}\left(
j,\lambda ^{\prime }\right) )=\Pr (c,\mathbf{c}\left( j^{\prime },\lambda
^{\prime }\right) )=\frac{1}{2}, \\
\Pr (d,\mathbf{c}\left( k,\lambda ^{\prime }\right) ) &=&1,\text{ and }\Pr
(b,\mathbf{c}\left( l,\lambda ^{\prime }\right) )=1\text{.}
\end{eqnarray*}%
Therefore, $\lambda ^{\prime }$ is also ordinally dominated by $\lambda
^{\ast \ast }$.\footnote{%
Strictly speaking, FDAT is designed to obtain a random matching, that is, a
stochastic matrix, which can be induced by an ex ante stable lottery that is
not ordinally dominated by any other ex ante stable lottery. Since several
lotteries may induce the same random matching, the lottery $\lambda ^{\ast }$
considered above can be regarded as one such lottery.}

On the other hand, the lottery $\lambda ^{\bar{\mu}}=1$ satisfies ex ante
stability and ETE, and is not ordinally dominated by any ex post stable
lottery. We can generalize this result; that is, there always exists a
stable matching $\mu $ such that the ETE reassignment $\lambda $ with $%
\lambda _{\mu }=1$ is not ordinally dominated by any ex post stable lottery.
To identify such a stable matching, we use a rank-based criterion.

Let $r_{i}\left( c\right) =\left\vert \left\{ \left. d\in C\text{ }%
\right\vert \text{ }dP_{i}c\right\} \right\vert +1$ that represents the rank
position of $c$ in the preference ranking of student $i$. Moreover, let 
\begin{equation*}
R\left( \lambda \right) =\sum\limits_{i\in I}\sum\limits_{c\in C}\Pr (c,%
\mathbf{c}\left( i,\lambda \right) )\times r_{i}\left( c\right) ,
\end{equation*}%
which is the expected value of the sum of rank positions of the assignment
in the preference rankings of all agents.\footnote{%
Rank-based measures of this kind are also relevant in practice, since
outcomes in matching systems are sometimes evaluated using aggregate
statistics of assigned ranks. Accordingly, the market design literature has
used such measures as efficiency criteria; see, for example, Featherstone
(2020).}

In Example 1, we have

\begin{equation*}
R\left( \lambda ^{\bar{\mu}}\right) =8<11=R\left( \lambda ^{\mu ^{\ast
}}\right) .
\end{equation*}%
It is easy to verify that the ETE reassignment of a lottery does not change
the value of $R$. Indeed, if $\bar{\lambda}$ and $\lambda ^{\ast \ast }$ are
the ETE reassignments of $\lambda ^{\bar{\mu}}$ and $\lambda ^{\mu ^{\ast }},
$ respectively, then 
\begin{equation*}
R\left( \bar{\lambda}\right) =8<11=R\left( \lambda ^{\ast \ast }\right) 
\text{.}
\end{equation*}%
Therefore, if $\lambda ^{\bar{\mu}}$ minimizes $R$ among all deterministic
lotteries induced by stable matchings $\mu $, then there is no ex post
stable lottery $\lambda $ such that $R\left( \lambda \right) <R\left( \bar{%
\lambda}\right) $. These observations yield the following result.

\begin{proposition}
If $\bar{\mu}$ is a stable matching where $R\left( \lambda ^{\bar{\mu}%
}\right) \leq R\left( \lambda ^{\mu }\right) $ for any stable matching $\mu $%
, then the ETE reassignment of $\lambda ^{\bar{\mu}}$ is an ex ante stable
lottery that is not ordinally dominated by any other ex post stable lottery
and satisfies ETE.
\end{proposition}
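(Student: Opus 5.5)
Let $\bar{\lambda}$ denote the ETE reassignment of $\lambda^{\bar{\mu}}$. The plan has three parts: ex ante stability, ETE, and the absence of ordinal domination by ex post stable lotteries. The first two come directly from earlier results. Since $\bar{\mu}$ is stable, Lemma 1 makes $\lambda^{\bar{\mu}}$ ex ante stable, and Proposition 1 then makes $\bar{\lambda}$ ex ante stable. Lemma 3 gives ETE. All the work is in the third part, which I will handle with the rank functional $R$.

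First I will show that ETE reassignment preserves $R$. Let $i\in I_n$. Every $j\in I_n$ has the same preference, so $r_i=r_j$. By Lemma 2,
\begin{equation*}
\sum_{i\in I_n}\sum_{c}\Pr(c,\mathbf{c}(i,\bar{\lambda}))\,r_i(c)=\sum_{i\in I_n}\frac{1}{|I_n|}\sum_{j\in I_n}\sum_c \Pr(c,\mathbf{c}(j,\lambda^{\bar{\mu}}))\,r_j(c).
\end{equation*}
The right-hand side equals the contribution of $I_n$ to $R(\lambda^{\bar{\mu}})$. Summing over $n$ gives $R(\bar{\lambda})=R(\lambda^{\bar{\mu}})$.

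Second, $R$ is linear in the lottery: $R(\lambda)=\sum_{\mu}\lambda_\mu R(\lambda^\mu)$. Every matching in the support of an ex post stable lottery $\lambda$ is stable, so the hypothesis on $\bar{\mu}$ gives $R(\lambda)\ge R(\lambda^{\bar{\mu}})=R(\bar{\lambda})$.

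Third, I will show that ordinal domination strictly lowers $R$. Write $r_i(c)=\sum_{d\in C}\mathbf{1}[r_i(c)\ge r_i(d)]$. Then the expected rank of student $i$ under $\lambda$ equals $\sum_{d\in C}\Pr(\text{$i$'s school is weakly worse than $d$})=\sum_{d}\bigl(1-\bar{F}(d,i,\lambda)+\Pr(d,\mathbf{c}(i,\lambda))\bigr)$. It is cleaner to write this as a sum over rank thresholds: $\sum_{t=1}^{|C|}\Pr(\text{rank}\ge t)=\sum_{t}(1-\bar{F}(c_{t-1},i,\lambda))$, where $c_{t-1}$ is the school of rank $t-1$ for $i$ and the term for $t=1$ equals $1$. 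Suppose an ex post stable $\lambda$ ordinally dominates $\bar{\lambda}$. Then each $\bar{F}(\cdot,i,\lambda)\ge \bar{F}(\cdot,i,\bar{\lambda})$, and the inequality is strict for some $i$ and $c$. This strict inequality must occur at a school $c$ that is not ranked last, because at the last-ranked school $\bar{F}$ equals $1$ for both lotteries. Hence $R(\lambda)<R(\bar{\lambda})$, which contradicts the second step. Therefore $\bar{\lambda}$ is not ordinally dominated by any ex post stable lottery; in particular this covers ex ante stable lotteries, since ex ante stability implies ex post stability.

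The main obstacle is the bookkeeping in the third step. I need to check that the threshold representation of expected rank is exact and that strict stochastic dominance at some $c$ forces a strict decrease. Both follow once the last-ranked school is handled as above, so the argument should go through without serious difficulty.
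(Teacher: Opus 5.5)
Your proposal is correct and follows essentially the same route as the paper. You obtain ex ante stability and ETE from Lemma 1, Proposition 1 and Lemma 3; you use that $R$ is unchanged by ETE reassignment and linear over the support of an ex post stable lottery; and you conclude from the fact that ordinal domination strictly lowers $R$. The only differences are that you prove this last fact (cited by the paper as Lemma 5, Okumura's Claim 1) directly via the threshold representation of expected rank, and you derive the invariance of $R$ explicitly from Lemma 2 rather than ``by construction''; both verifications are sound.
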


\textbf{Proof.} We use the following result established by Okumura (2025).

\begin{lemma}
(Okumura 2025, Claim 1) If $\lambda $ is ordinally dominated by $\lambda
^{\prime },$ then $R\left( \lambda \right) >R\left( \lambda ^{\prime
}\right) $.
\end{lemma}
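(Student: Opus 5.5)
The statement just given is Lemma 5 (Okumura 2025, Claim 1); the plan is to prove it directly from the definitions and then say how it closes Proposition 2.

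For the lemma, I will rewrite $R$ in terms of the upper cumulative probabilities $\bar F$. For each student $i$, list the schools in $i$'s preference order as $c^1P_ic^2P_i\cdots P_ic^{|C|}$, so that $r_i(c^k)=k$. Writing $k=\sum_{h=1}^{|C|}\mathbf 1[h\le k]$ and exchanging the order of summation gives
\begin{equation*}
\sum_{c\in C}\Pr(c,\mathbf{c}(i,\lambda))\, r_i(c)=\sum_{h=1}^{|C|}\bigl(1-\bar F(c^{h-1},i,\lambda)\bigr),
\end{equation*}
with the convention $\bar F(c^0,i,\lambda)=0$. The key property of this identity is that the expected rank of each student is a strictly decreasing affine function of the vector $(\bar F(c,i,\lambda))_{c\in C}$. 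Each $\bar F(c,i,\lambda)$ with $c\neq c^{|C|}$ appears in it with coefficient $-1$, and $\bar F(c^{|C|},i,\lambda)=1$ always holds.

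Now suppose $\lambda$ is ordinally dominated by $\lambda'$. Then $\bar F(c,i,\lambda')\ge\bar F(c,i,\lambda)$ for every $i$ and every $c$, so the expected rank of each student weakly falls when we pass from $\lambda$ to $\lambda'$. For some student $i$ the inequality is strict at some $c$. That $c$ cannot be the bottom school, since both values there equal $1$, so the strict gap appears with coefficient $-1$ and the expected rank of $i$ falls strictly. Summing over $I$ gives $R(\lambda)>R(\lambda')$. There is no real obstacle here; the only point needing care is the bottom-school boundary case, which is exactly what makes the inequality strict.

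To finish Proposition 2, let $\bar\lambda$ be the ETE reassignment of $\lambda^{\bar\mu}$. Ex ante stability follows from Lemma 1 and Proposition 1, and ETE follows from Lemma 3. By Lemma 2, the averaging within each group of equals preserves the group's total rank mass, since members of a group share the same preferences. Hence $R(\bar\lambda)=R(\lambda^{\bar\mu})$.

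Any ex post stable $\lambda$ is a convex combination of stable matchings, and $R$ is linear in $\lambda$. Therefore $R(\lambda)=\sum_\mu\lambda_\mu R(\lambda^\mu)\ge R(\lambda^{\bar\mu})=R(\bar\lambda)$. If $\bar\lambda$ were ordinally dominated by such a $\lambda$, the lemma would give $R(\bar\lambda)>R(\lambda)$, which contradicts this inequality. So $\bar\lambda$ is not ordinally dominated by any ex post stable lottery.
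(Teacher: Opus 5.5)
Your proof is correct. The paper itself gives no proof of this lemma; it imports it from Okumura (2025, Claim 1). Your argument is therefore a self-contained justification rather than a variant of an argument in the text.

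Writing each student's expected rank as $|C|-\sum_{h=1}^{|C|-1}\bar F(c^h,i,\lambda)$ makes it coordinatewise nonincreasing in the cumulative probabilities. You correctly locate where strictness comes from: $\bar F(c^{|C|},i,\lambda)=1$ for every lottery. This holds because in this model every student is always assigned some school, $\mu(i)\in C$. So the strict gap must occur at a non-bottom school, where the coefficient is $-1$.

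Your completion of Proposition 2 matches the paper's proof. It uses the same steps: $R$ is invariant under ETE reassignment (via Lemma 2 and identical preferences within a group), $R$ is linear in the lottery, $R(\lambda^{\bar\mu})$ is minimal over stable matchings, and then the lemma applies. You only skip the paper's side remark that $\bar\mu$ is constrained efficient, which the proposition does not need.
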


Let $\bar{\mu}$ be a stable matching where $R\left( \lambda ^{\bar{\mu}%
}\right) \leq R\left( \lambda ^{\mu }\right) $ for any stable matching $\mu $%
. Then, by Lemma 5, $\bar{\mu}$ is constrained efficient.

Next, let $\bar{\lambda}$ be the ETE reassignment of $\lambda ^{\bar{\mu}}$.
By Lemma 3 and Proposition 1, $\bar{\lambda}$ is an ex ante stable lottery
that satisfies ETE. We show that $\bar{\lambda}$ is not ordinally dominated
by any ex post stable lottery. Let $\lambda ^{\prime }$ be an arbitrary ex
post stable lottery.

First, by the construction of the ETE reassignment, $R\left( \lambda ^{\bar{%
\mu}}\right) =R\left( \bar{\lambda}\right) $. Next, 
\begin{equation*}
R\left( \lambda ^{\prime }\right) =\sum\limits_{\mu \in \mathcal{M}\left(
\lambda ^{\prime }\right) }\lambda _{\mu }^{\prime }\sum\limits_{i\in
I}r_{i}\left( \mu \left( i\right) \right) =\sum\limits_{\mu \in \mathcal{M}%
\left( \lambda ^{\prime }\right) }\lambda _{\mu }^{\prime }\times R\left(
\lambda ^{\mu }\right) .
\end{equation*}%
Since $\lambda ^{\prime }$ is an ex post stable lottery, any $\mu \in 
\mathcal{M}\left( \lambda ^{\prime }\right) $ is stable. Therefore, 
\begin{equation*}
R\left( \bar{\lambda}\right) =R\left( \lambda ^{\bar{\mu}}\right) \leq
R\left( \lambda ^{\prime }\right) \text{.}
\end{equation*}%
This result and Lemma 5 imply that $\bar{\lambda}$ is not ordinally
dominated by any ex post stable lottery. \textbf{Q.E.D.}\newline

Therefore, there exists a constrained efficient matching $\mu $ such that
the ETE reassignment of $\lambda ^{\mu }$ is ex ante stable and is not
ordinally dominated by any ex post stable lottery. However, this proof does
not provide a constructive procedure for identifying such a matching.
Developing a computationally efficient algorithm for finding such a matching
remains an open question.

\section*{References}

\begin{description}
\item Aziz, H., Biro, P., Cs\'{a}ji, G., Demeulemeester, T. 2026. Smart
Lotteries in School Choice: Ex-ante Pareto-Improvement with Ex-post
Stability. Unpublished manuscript available at arXiv:2602.10679.

\item Cookson, B., Shah, N. 2025. Fairly Stable Two-Sided Matching with
Indifferences. Proceedings of the 26th ACM Conference on Economics and
Computation, p.1130. https://doi.org/10.1145/3736252.3742675

\item Erdil, A., Ergin, H. 2008. What's the Matter with Tie-Breaking?
Improving Efficiency in School Choice. American Economic Review 98, 669--689.

\item Featherstone, C. 2020. Rank efficiency: Modeling a common policymaker
objective. Unpublished paper.

\item Kesten, O. 2010. School Choice with Consent. Quarterly Journal of
Economics 125, 1297--1348.

\item Kesten, O., Kurino, M., Nesterov, A.S. 2017. Efficient lottery design.
Social Choice and Welfare 48, 31--57.

\item Kesten, O., \"{U}nver, M. U. 2015. A Theory of School-Choice
Lotteries. Theoretical Economics 10, 543--595.

\item Okumura, Y. 2025. Equal Treatment of Equals and Efficiency in
Probabilistic Assignments. Unpublished manuscript available at arXiv
2508.14522.
\end{description}

\end{document}